\newtheorem{theorem}{Theorem}
\newtheorem{lemma}{Lemma}
\begin{document}
%
\title{Solving Imperfect-Information Games \\via Discounted Regret Minimization}
\author{Noam Brown\\
	Computer Science Department\\
	Carnegie Mellon University\\
	noamb@cs.cmu.edu
	\And
	Tuomas Sandholm\\
	Computer Science Department\\
	Carnegie Mellon University\\
	sandholm@cs.cmu.edu
}
\maketitle
\vspace{-0.05in}
\begin{abstract}
\vspace{-0.05in}
\emph{Counterfactual regret minimization (CFR)} is a family of iterative algorithms that are the most popular and, in practice, fastest approach to approximately solving large imperfect-information games. In this paper we introduce novel CFR variants that 1) discount regrets from earlier iterations in various ways (in some cases differently for positive and negative regrets), 2) reweight iterations in various ways to obtain the output strategies, 3) use a non-standard regret minimizer and/or 4) leverage ``optimistic regret matching''. They lead to dramatically improved performance in many settings. For one, we introduce a variant that outperforms \emph{CFR+}, the prior state-of-the-art algorithm, in every game tested, including large-scale realistic settings. CFR+ is a formidable benchmark: no other algorithm has been able to outperform it.
Finally, we show that, unlike CFR+, many of the important new variants are compatible with modern imperfect-information-game pruning techniques and one is also compatible with sampling in the game tree.
\end{abstract}

\vspace{-0.02in}
\section{Introduction}
\vspace{-0.01in}
Imperfect-information games model strategic interactions between players that have hidden information, such as in negotiations, cybersecurity, and auctions. A common benchmark for progress in this class of games is poker. The typical goal is to find an (approximate) equilibrium in which no player can improve by deviating from the equilibrium.

For extremely large imperfect-information games that cannot fit in a linear program of manageable size, typically iterative algorithms are used to approximate an equilibrium. A number of such iterative algorithms exist~\cite{Nesterov05:Excessive,Hoda10:Smoothing,Pays14:Interior,Kroer15:Faster,Heinrich15:Fictitious}. The most popular ones are variants of \emph{counterfactual regret minimization (CFR)}~\cite{Zinkevich07:Regret,Lanctot09:Monte,Gibson12:Generalized}. In particular, the development of \emph{CFR+} was a key breakthrough that in many cases is at least an order of magnitude faster than vanilla CFR~\cite{Tammelin14:Solving,Tammelin15:Solving}. CFR+ was used to essentially solve heads-up limit Texas hold'em poker~\cite{Bowling15:Heads-up} and was used to approximately solve heads-up no-limit Texas hold'em (HUNL) endgames in \emph{Libratus}, which defeated HUNL top professionals~\cite{Brown17:Superhuman,Brown17:Safe}. A blend of CFR and CFR+ was used by \emph{DeepStack} to defeat poker professionals in HUNL~\cite{Moravcik17:DeepStack}.

The best known theoretical bound on the number of iterations needed for CFR and CFR+ to converge to an $\epsilon$-equilibrium (defined formally in the next section) is $O(\frac{1}{\epsilon^2})$~\cite{Zinkevich07:Regret,Tammelin15:Solving}. This is asymptotically slower than first-order methods that converge at rate $O(\frac{1}{\epsilon})$~\cite{Hoda10:Smoothing,Kroer15:Faster}. However, in practice CFR+ converges much faster than its theoretical bound, and even faster than $O(\frac{1}{\epsilon})$ in many games.

Nevertheless, we show in this paper that one can design new variants of CFR that significantly outperform CFR+. We show that CFR+ does relatively poorly in games where some actions are very costly mistakes (that is, they cause high regret in that iteration) and provide an intuitive example and explanation for this. To address this weakness, we introduce variants of CFR that do not assign uniform weight to each iteration. Instead, earlier iterations are discounted. As we show, this high-level idea can be instantiated in many different ways.
Furthermore, some combinations of our ideas perform significantly better than CFR+ while others perform worse than it. In particular, one variant outperforms CFR+ in every game tested.

\vspace{-0.02in}
\section{Notation and Background}
\vspace{-0.01in}
\label{sec:background}
We focus on sequential games as the most interesting and challenging application of this work, but our techniques also apply to non-sequential games.
In an imperfect-information extensive-form (that is, tree-form) game there is a finite set of players, $\mathcal{P}$. ``Nature'' is also considered a player (representing chance) and chooses actions with a fixed known probability distribution.
A state $h$ is defined by all information of the current situation, including private knowledge known to only a subset of players. $A(h)$ is the actions available in a node and $P(h)$ is the unique player who acts at that node. If action $a \in A(h)$ leads from $h$ to $h'$, then we write $h \cdot a = h'$. $H$ is the set of all states in the game tree.
$Z \subseteq H$ are terminal states for which no actions are available. For each player $i \in \mathcal{P}$, there is a payoff function $u_i: Z\rightarrow \mathbb{R}$. 
We denote the range of payoffs in the game by $\Delta$. Formally, $\Delta_i = \max_{z \in Z} u_i(z) - \min_{z \in Z} u_i(z)$ and $\Delta = \max_{i \in \mathcal{P}} \Delta_i$.

Imperfect information is represented by {\em information sets} (infosets) for each player $i \in \mathcal{P}$. For any infoset $I$ belonging to player~$i$, all states $h, h' \in I$ are indistinguishable to player~$i$. Every non-terminal state $h \in H$ belongs to exactly one infoset for each player~$i$. The set of actions that may be chosen in $I$ is represented as $A(I)$. We represent the set of all infosets belonging to player $i$ where $i$ acts by $\mathcal{I}_i$.

A strategy $\sigma_i(I)$ is a probability vector over actions for player $i$ in infoset $I$. The probability of a particular action $a$ is denoted by $\sigma_i(I,a)$. Since all states in an infoset belonging to player $i$ are indistinguishable, the strategies in each of them are identical. Therefore, for any $h \in I$ we define $\sigma_i(h,a) = \sigma_i(I,a)$ where $i = P(h)$.
We define $\sigma_i$ to be a strategy for player $i$ in every infoset in the game where player $i$ acts.
A strategy profile $\sigma$ is a tuple of strategies, one per player. The strategy of every player other than $i$ is represented as $\sigma_{-i}$. $u_i(\sigma_i, \sigma_{-i})$ is the expected payoff for player $i$ if all players play according to strategy profile $\langle \sigma_i, \sigma_{-i} \rangle$.

$\pi^{\sigma}(h) = \Pi_{h' \cdot a \sqsubseteq h} \sigma_{P(h')}(h',a)$ is the joint probability of reaching $h$ if all players play according to $\sigma$. $\pi^{\sigma}_i(h)$ is the contribution of player $i$ to this probability (that is, the probability of reaching $h$ if all players other than $i$, and chance, always chose actions leading to $h$). $\pi^{\sigma}_{-i}(h)$ is the contribution of chance and all players other than $i$.

A {\em best response} to $\sigma_{i}$ is a strategy $BR(\sigma_{i})$ such that $u_i\big(\sigma_i, BR(\sigma_{i})\big) = \max_{\sigma'_{-i}} u_i(\sigma_i, \sigma'_{-i})$.
A {\em Nash equilibrium} $\sigma^*$ is a strategy profile where everyone plays a best response: $\forall i$, $u_i(\sigma^*_i, \sigma^*_{-i}) = \max_{\sigma'_i} u_i(\sigma'_i, \sigma^*_{-i})$~\cite{Nash50:Eq}.
The {\em exploitability} $e(\sigma_i)$ of a strategy $\sigma_i$ in a two-player zero-sum game is how much worse it does versus a best response compared to a Nash equilibrium strategy. Formally, $e(\sigma_i) = u_i\big(\sigma_i^*,BR(\sigma_i^*)\big) - u_i\big(\sigma_i,BR(\sigma_i)\big)$. In an $\epsilon$-Nash equilibrium, no player has exploitability higher than $\epsilon$.

In CFR, the strategy vector for each infoset is determined according to a regret-minimization algorithm. Typically, {\em regret matching} (RM) is used as that algorithm within CFR due to RM's simplicity and lack of parameters.

The expected value (or simply \emph{value}) to player~$i$ at state $h$ given that all players play according to strategy profile $\sigma$ from that point on is defined as $v_i^{\sigma}(h)$. The value to $i$ at infoset $I$ where $i$ acts is the weighted average of the value of each state in the infoset, where the weight is proportional to $i$'s belief that they are in that state conditional on knowing they are in $I$. Formally, $v^{\sigma}(I) = \sum_{h \in I} \big(\pi^{\sigma}_{-i}(h|I)v_i^{\sigma}(h)\big)$ and $v^{\sigma}(I, a) = \sum_{h \in I} \big(\pi^{\sigma}_{-i}(h|I) v_i^{\sigma}(h \cdot a)\big)$ where $\pi^{\sigma}_{-i}(h|I) = \frac{\pi^{\sigma}_{-i}(h)}{\pi^{\sigma}_{-i}(I)}$.

Let $\sigma^t$ be the strategy on iteration $t$. The {\em instantaneous regret}
for action $a$ in infoset $I$ on iteration $t$ is
$r^t(I,a) = v^{\sigma^t}(I,a) - v^{\sigma^t}(I)$
and the {\em regret} on iteration $T$ is
\begin{equation}
R^T(I,a) = \sum_{t = 1}^T r^{T}(I,a)
\label{eq:regret}
\end{equation}
Additionally, $R^T_+(I,a) = \max\{R^T(I,a), 0 \}$ and $R^T(I) = \max_a\{R_+^T(I,a)\}$. Regret
for player $i$ in the entire game is
\begin{equation}
R_i^T = \max_{\sigma_i'} \sum_{t = 1}^T \big(u_i(\sigma'_i, \sigma_{-i}^t) - u_i(\sigma^t_i, \sigma_{-i}^t)\big)
\end{equation}

In RM, a player picks a distribution over actions in an infoset in proportion to the positive regret on those actions. Formally, on each iteration $T+1$, player $i$ selects actions $a \in A(I)$ according to probabilities
\begin{equation}
\sigma^{T+1}(I,a) =
\begin{cases}
\frac{R^T_+(I,a)}{\sum_{a' \in A(I)}R_+^T(I,a')}, & \text{if}\ \sum_{a'}R^T_+(I,a') > 0 \\
\frac{1}{|A(I)|}, & \text{otherwise}
\end{cases}
\label{eq:rm}
\end{equation}
If a player plays according to regret matching in infoset $I$ on every iteration, then on iteration $T$, $R^T(I) \le \Delta\sqrt{|A(I)|}\sqrt{T}$~\cite{Cesa-Bianchi06:Prediction}.

If a player plays according to CFR on every iteration, then
\begin{equation}
R_i^T \le \sum_{I \in \mathcal{I}_i} R^T(I)
\label{eq:bound}
\end{equation}
So, as $T \rightarrow \infty$, $\frac{R_i^T}{T} \rightarrow 0$.

The average strategy $\bar{\sigma}_i^T(I)$ for an infoset $I$ is
\begin{equation}
\bar{\sigma}_i^T(I) = \frac{\sum_{t = 1}^T \big(\pi_i^{\sigma^t}(I)\sigma_i^t(I)\big)}{\sum_{t = 1}^T \pi_i^{\sigma^t}(I)}
\label{eq:average}
\end{equation}

CFR minimizes external regret~\cite{Zinkevich07:Regret}, so it converges to a \emph{coarse correlated equilibrium}~\cite{Hart00:Simple}. In two-player zero-sum games, this is also a Nash equilibrium.
In two-player zero-sum games, if both players' average regret satisfies $\frac{R_i^T}{T} \le \epsilon$, then their average strategies $\langle \bar{\sigma}^T_1, \bar{\sigma}^T_2 \rangle$ are a $2\epsilon$-Nash equilibrium~\cite{Waugh09:Thesis}. Thus, CFR is an anytime algorithm for finding an $\epsilon$-Nash equilibrium in two-player zero-sum games. 

Although CFR theory calls for both players to simultaneously update their regrets on each iteration, in practice far better performance is achieved by alternating which player updates their regrets on each iteration. However, this complicates the theory for convergence~\cite{Farina19:Online,Burch18:Revisiting}.

CFR+ is like CFR but with the following small changes.
First, after each iteration any action with negative regret is set to zero regret. Formally, CFR+ chooses its strategy on iteration $T+1$ according to \emph{Regret Matching+ (RM+)}, which is identical to Equation~(\ref{eq:rm}) but uses the regret-like value $Q^T(I,a) = \max\{0, Q^{T-1}(I,a) + r^t(I,a)\}$ rather than $R^T_+(I,a)$. 
Second, CFR+ uses a weighted average strategy where iteration $T$ is weighted by $T$ rather than using a uniformly-weighted average strategy as in CFR. 
The best known convergence bound for CFR+ is higher (that is, worse in exploitability) than CFR by a constant factor of 2.
Despite that, CFR+ typically converges much faster than CFR and usually even faster than $O(\frac{1}{\epsilon})$.

However, in some games CFR+ converges slower than $\frac{1}{T}$. We now provide a two-player zero-sum game with this property. Consider the payoff matrix $\left[ \begin{smallmatrix} 1&0.9\\ -0.7&1 \end{smallmatrix} \right]$ (where $P_1$ chooses a row and $P_2$ simultaneously chooses a column; the chosen entry in the matrix is the payoff for $P_1$ while $P_2$ receives the opposite).
We now proceed to introducing our improvements to the CFR family.

\vspace{-0.00in}
\section{Weighted Averaging Schemes for CFR+}
\vspace{-0.02in}
As described in the previous section, CFR+ traditionally uses ``linear'' averaging, in which iteration $t$'s contribution to the average strategy is proportional to $t$. In this section we prove a bound for any sequence of non-decreasing weights when calculating the average strategy. However, the bound on convergence is never lower than that of vanilla CFR (that is, uniformly equal weight on the iterations).

\begin{theorem}
	\label{th:cfrp}
	Suppose $T$ iterations of RM+ are played in a two-player zero-sum game. Then the weighted average strategy profile, where iteration $t$ is weighed proportional to $w_t > 0$ and $w_i \le w_j$ for all $i < j$, is a $\frac{w_T}{\sum_{t = 1}^T w_t} \Delta|\mathcal{I}|\sqrt{|A|}\sqrt{T}$-Nash equilibrium.
\end{theorem}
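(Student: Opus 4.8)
The plan is to reduce everything to a single per-infoset inequality about RM+ under non-decreasing weights, and then feed it through the standard CFR machinery. Write $Q^t(I,a)$ for the RM+ accumulator ($Q^0 \equiv 0$ and $Q^t(I,a) = \max\{0,\, Q^{t-1}(I,a) + r^t(I,a)\}$), and let $R^{T,w}(I,a) = \sum_{t=1}^T w_t\, r^t(I,a)$ be the weighted counterfactual regret. The key lemma I would prove is
\[
R^{T,w}(I,a) \le w_T\, Q^T(I,a)
\]
for every infoset $I$, action $a$, and horizon $T$. I would prove it by induction on $T$: the base case is immediate since $Q^1(I,a) = \max\{0,\, r^1(I,a)\} \ge r^1(I,a)$; for the step, $R^{T,w}(I,a) = R^{T-1,w}(I,a) + w_T r^T(I,a) \le w_{T-1} Q^{T-1}(I,a) + w_T r^T(I,a) \le w_T\big(Q^{T-1}(I,a) + r^T(I,a)\big) \le w_T Q^T(I,a)$, where the middle inequality uses $w_{T-1} \le w_T$ together with $Q^{T-1}(I,a) \ge 0$. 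This non-negativity is exactly the property RM+ guarantees (and plain RM does not), so it is the crux of why the theorem is stated for RM+.

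Combining the lemma with the known RM+ bound $Q^T(I,a) \le \Delta\sqrt{|A(I)|}\sqrt{T}$ gives $R^{T,w}(I,a) \le w_T \Delta\sqrt{|A|}\sqrt{T}$, hence $\max_a \max\{0,\, R^{T,w}(I,a)\} \le w_T \Delta\sqrt{|A|}\sqrt{T}$ for each $I$, with $|A| = \max_I |A(I)|$. Next I would invoke the weighted analogue of the counterfactual-regret decomposition~(\ref{eq:bound}): running the same recursive argument on the iteration-$t$ regrets scaled by $w_t$ shows that player $i$'s weighted full-game regret $R^{T,w}_i := \max_{\sigma'_i}\sum_{t=1}^T w_t\big(u_i(\sigma'_i,\sigma^t_{-i}) - u_i(\sigma^t_i,\sigma^t_{-i})\big)$ is at most $\sum_{I \in \mathcal{I}_i} \max_a \max\{0,\, R^{T,w}(I,a)\} \le |\mathcal{I}_i|\, w_T \Delta\sqrt{|A|}\sqrt{T}$. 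Dividing by $\sum_{t=1}^T w_t$ bounds each player's weighted \emph{average} regret by $\frac{w_T}{\sum_{t=1}^T w_t}|\mathcal{I}_i|\Delta\sqrt{|A|}\sqrt{T}$.

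Finally I would close with the weighted version of the folk theorem relating average regret to the Nash gap: the weighted average strategy of~(\ref{eq:average}) (with iteration $t$ reweighted by $w_t$) has exploitability controlled by the weighted average regret, and in a two-player zero-sum game the two players' bounds add, so the weighted average strategy profile is a $\big(\frac{w_T}{\sum_{t=1}^T w_t}(|\mathcal{I}_1|+|\mathcal{I}_2|)\Delta\sqrt{|A|}\sqrt{T}\big)$-Nash equilibrium, which is the claim with $|\mathcal{I}| = |\mathcal{I}_1| + |\mathcal{I}_2|$. Since $w_T / \sum_{t=1}^T w_t \ge 1/T$ with equality for uniform weights, this never improves on the vanilla-CFR bound, as advertised. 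The main obstacle is not the induction — that is short — but making precise that both the CFR regret decomposition and the regret-to-equilibrium conversion survive the reweighting; both do, because every inequality in their proofs is linear in the per-iteration regrets and merely carries the extra factors $w_t$, but this bookkeeping is where care is needed.
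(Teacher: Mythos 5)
Your proposal is correct and follows essentially the same route as the paper: bound the weighted per-infoset regret by $w_T\,\Delta\sqrt{|A|}\sqrt{T}$ using the RM+ accumulator and monotone weights, then apply the counterfactual-regret decomposition~(\ref{eq:bound}) to the reweighted sequence, divide by $\sum_t w_t$, and invoke the weighted regret-to-Nash conversion with $|\mathcal{I}_1|+|\mathcal{I}_2|=|\mathcal{I}|$. The only cosmetic difference is that you prove the key inequality $\sum_t w_t r^t(I,a) \le w_T Q^T(I,a)$ by a direct induction using $Q^{t}\ge 0$ and $Q^t \ge Q^{t-1}+r^t$, whereas the paper obtains the same bound by applying its general lemma on ``$BC$-plausible'' sequences (with $C=0$) to the increments of $Q$; the two arguments rest on exactly the same facts.
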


The proof is in the appendix. It largely follows the proof for linear averaging in CFR+~\cite{Tammelin15:Solving}.

Empirically we observed that CFR+ converges faster when assigning iteration $t$ a weight of $t^2$ rather than a weight of $t$ when calculating the average strategy. We therefore use this weight for CFR+ and its variants throughout this paper when calculating the average strategy.

\vspace{-0.02in}
\section{Regret Discounting for CFR and Its Variants}
\vspace{-0.02in}
\label{sec:discount}

In all past variants of CFR, each iteration's contribution to the \emph{regrets} is assigned equal weight. In this section we discuss discounting iterations in CFR when determining regrets---in particular, assigning less weight to earlier iterations. This is very different from, and orthogonal to, the idea of discounting iterations when computing the average strategy, described in the previous section. 

To motivate discounting, consider the simple case of an agent deciding between three actions. The payoffs for the actions are 0, 1, and -1,000,000, respectively. From (\ref{eq:rm}) we see that CFR and CFR+ assign equal probability to each action on the first iteration. This results in regrets of 333,333, 333,334, and 0, respectively. If we continue to run CFR or CFR+, the next iteration will choose the first and second action with roughly 50\% probability each, and the regrets will be updated to be roughly 333,332.5 and 333,334.5, respectively. It will take 471,407 iterations for the agent to choose the second action---that is, the best action---with 100\% probability. Discounting the first iteration over time would dramatically speed convergence in this case. While this might seem like a contrived example, many games include highly suboptimal actions. In this simple example the bad action was chosen on the first iteration, but in general bad actions may be chosen throughout a run, and discounting may be useful far beyond the first few iterations.

Discounting prior iterations has received relatively little attention in the equilibrium-finding community. ``Optimistic'' regret minimizing variants exist that assign a higher weight to recent iterations, but this extra weight is temporary and typically only applies to a short window of recent iterations; for example, counting the most recent iterate twice~\cite{Syrgkanis15:Fast}. We investigate optimistic regret minimizers as part of CFR later in this paper. CFR+ discounts prior iterations' contribution to the \emph{average strategy}, but not the \emph{regrets}. Discounting prior iterations has also been used in CFR for situations where the game structure changes, for example due to interleaved abstraction and equilibrium finding~\cite{Brown14:Regret,Brown15:Simultaneous}. There has also been some work on applying discounting to perfect-information game solving in Monte Carlo Tree Search~\cite{Hashimoto11:Accelerated}.

Outside of equilibrium finding, prior research has analyzed the theory for discounted regret minimization~\cite{Cesa-Bianchi06:Prediction}. That work investigates applying RM (and other regret minimizers) to a sequence of iterations in which iteration $t$ has weight $w_t$ (assuming $w_t \le 1$ and the final iteration has weight $1$). For RM, it proves that if $\sum_{t=1}^{\infty} w_t = \infty$ then weighted average regret, defined as $R_i^{w,T} = \max_{a \in A} \frac{\sum_{t=1}^T(w_t r^t(a))}{\sum_{t=1}^T w^t}$ is bounded by
\begin{equation}
\label{eq:discount_bound}
R_i^{w,T} \le \frac{\Delta \sqrt{|A|}\sqrt{\sum_{t=1}^Tw_t^2}}{\sum_{t=1}^Tw_t}
\end{equation}
Prior work has shown that, in two-player zero-sum games, if weighted average regret is $\epsilon$, then the weighted average strategy, defined as $\sigma_i^{w,T}(I) = \frac{\sum_{t \in T} \big(w_t \pi_i^{\sigma^t}(I)\sigma_i^t(I)\big)}{\sum_{t \in T} (w_t \pi_i^{\sigma^t}(I))}$ for infoset $I$, is a $2\epsilon$-Nash equilibrium~\cite{Brown14:Regret}.

While there are a limitless number of discounting schemes that converge in theory, not all of them perform well in practice. This paper introduces a number of variants that perform particularly well also in practice. The first algorithm, which we refer to as \emph{linear CFR (LCFR)}, is identical to CFR, except on iteration $t$ the updates to the regrets and average strategies are given weight $t$. That is, the iterates are weighed linearly. (Equivalently, one could multiply the accumulated regret by $\frac{t}{t + 1}$ on each iteration. We do this in our experiments to reduce the risk of numerical instability.) This means that after $T$ iterations of LCFR, the first iteration only has a weight of $\frac{2}{T^2 + T}$ on the regrets rather than a weight of $\frac{1}{T}$, which would be the case in CFR and CFR+. In the motivating example introduced at the beginning of this section, LCFR chooses the second action with 100\% probability after only 970 iterations while CFR+ requires 471,407 iterations. Furthermore, from (\ref{eq:discount_bound}), the theoretical bound on the convergence of regret is only greater than vanilla CFR by a factor of $\frac{2}{\sqrt{3}}$. One could more generally use any polynomial weighting of $t$.

Since the changes from CFR that lead to LCFR and CFR+ do not conflict, it is natural to attempt to combine them into a single algorithm that weighs each iteration $t$ proportional to $t$ and also has a floor on regret at zero like CFR+. However, we empirically observe that this algorithm, which we refer to as \emph{LCFR+}, actually leads to performance that is \emph{worse} than LCFR and CFR+ in the games we tested, even though its theoretical bound on convergence is the same as for LCFR.

Nevertheless, we find that using a less-aggressive discounting scheme leads to consistently strong performance. We can consider a family of algorithms called \emph{Discounted CFR} with parameters $\alpha$ $\beta$, and $\gamma$ \emph{(DCFR$_{\alpha,\beta,\gamma}$)}, defined by multiplying accumulated positive regrets by $\frac{t^\alpha}{t^\alpha + 1}$, negative regrets by $\frac{t^\beta}{t^\beta + 1}$, and contributions to the average strategy by $(\frac{t}{t + 1})^\gamma$ on each iteration $t$. In this case, LCFR is equivalent to DCFR$_{1,1,1}$, because multiplying iteration $t$'s regret and contribution to the average strategy by $\frac{t'}{t'+1}$ on every iteration $t \le t' < T$ is equivalent to weighing iteration $t$ by $\frac{t}{T}$. CFR+ (where iteration $t$'s contribution to the average strategy is proportional to $t^2$) is equivalent to DCFR$_{\infty,-\infty,2}$.

In preliminary experiments we found the optimal choice of $\alpha$, $\beta$, and $\gamma$ varied depending on the specific game. However, we found that setting $\alpha = 3/2$, $\beta = 0$, and $\gamma = 2$ led to performance that was consistently stronger than CFR+. Thus, when we refer to DCFR with no parameters listed, we assume this set of parameters are used.

Theorem~\ref{th:dcfr} shows that DCFR has a convergence bound that differs from CFR only by a constant factor.

\begin{theorem}
	\label{th:dcfr}
	Assume that $T$ iterations of DCFR are conducted in a two-player zero-sum game. Then the weighted average strategy profile is a $6\Delta|\mathcal{I}|(|\sqrt{|A|} + \frac{1}{\sqrt{T}})/\sqrt{T}$-Nash equilibrium.
\end{theorem}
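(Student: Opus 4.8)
The plan is to bound the exploitability of the strategy profile DCFR outputs by its \emph{weighted average regret}, using the same weights $w_t$ that DCFR uses for the average strategy. Because the $\gamma=2$ averaging multiplies iteration~$t$'s contribution by $(\tfrac{t}{t+1})^2$ on every subsequent iteration, iteration~$t$ ends up weighted proportionally to $t^2$; so by the quoted fact that a weighted average regret of $\epsilon$ yields a $2\epsilon$-Nash equilibrium, together with the weighted analogue of the regret decomposition~(\ref{eq:bound}), it suffices to show at every infoset $I$ that the weighted average regret $(\sum_{t=1}^{T}t^2)^{-1}\max_{a\in A(I)}\sum_{t=1}^{T}t^2\,r^t(I,a)$ is $O\bigl(\Delta(\sqrt{|A(I)|}+\tfrac1{\sqrt T})/\sqrt T\bigr)$, and then to sum over $\mathcal{I}_i$. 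The conceptual crux is that one should \emph{not} try to recover these $t^2$ weights from the regrets DCFR actually tracks: with $\alpha=\tfrac32$, the accumulated positive regret $\tilde R^t(I,a)$ maintained by DCFR weights past iterations by factors that stay between two positive constants -- nearly uniformly, certainly not as $t^2$. Instead I would use $\tilde R^t$ only as an auxiliary quantity that provably lies in a narrow band and reconstruct the $t^2$-weighted regret from it.

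The first ingredient is an analogue, for DCFR's discounted regrets, of the CFR+ potential bound: $\tilde R^t(I,a)\le\Delta\sqrt{|A(I)|\,t}$ for all $t$. Let $\phi_I^t=\sum_{a\in A(I)}\bigl(\max\{0,\tilde R^t(I,a)\}\bigr)^2$. Three observations give $\phi_I^t\le\phi_I^{t-1}+|A(I)|\Delta^2$: every per-iteration rescaling factor ($\tfrac{t^{3/2}}{t^{3/2}+1}$ on positive regrets, $\tfrac12$ on negative ones) is at most $1$ and so only shrinks the positive part; $\max\{0,x+y\}^2\le(\max\{0,x\}+y)^2$ for all reals $x,y$; and the cross term $\sum_a\max\{0,\tilde R^{t-1}(I,a)\}\,r^t(I,a)$ vanishes, since RM+ plays proportionally to $\max\{0,\tilde R^{t-1}(I,\cdot)\}$ and $\sum_a\sigma^t(I,a)r^t(I,a)=0$. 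Summing the recursion with $\tilde R^0\equiv0$ and $|r^t(I,a)|\le\Delta$ gives the bound; note it does not use the particular values of $\alpha,\beta$. The second ingredient is a matching \emph{lower} bound $\tilde R^t(I,a)\ge-\Delta$, by induction on $t$: if the updated value $\tilde R^{t-1}(I,a)+r^t(I,a)$ is positive its rescaling stays positive, and otherwise it equals $\tfrac12\bigl(\tilde R^{t-1}(I,a)+r^t(I,a)\bigr)\ge\tfrac12(-\Delta-\Delta)=-\Delta$. This is exactly where $\beta=0$ matters: the \emph{constant} contraction factor $\tfrac12$ on negative regrets keeps $\tilde R^t$ from drifting to $-\infty$ (with $\beta>0$ the factor $\tfrac{t^\beta}{t^\beta+1}\to1$ and the argument fails).

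Finally, inverting the DCFR update gives $r^t(I,a)=\bigl(\tilde R^t(I,a)-\tilde R^{t-1}(I,a)\bigr)+\delta_t(I,a)\,\tilde R^t(I,a)$, with $\delta_t(I,a)=t^{-3/2}$ when $\tilde R^t(I,a)>0$ and $\delta_t(I,a)=1$ (and $\tilde R^t(I,a)\le0$) otherwise. Multiplying by $t^2$ and summing, Abel summation turns the first part into $T^2\tilde R^T(I,a)-\sum_{t=1}^{T-1}(2t+1)\tilde R^t(I,a)$, which the two bounds above control by $\Delta\sqrt{|A(I)|}\,T^{5/2}+\Delta T^2$; the $\delta_t$ part contributes only nonpositive summands in the negative case, and in the positive case at most $\sum_t t^2\cdot t^{-3/2}\cdot\Delta\sqrt{|A(I)|\,t}=\Delta\sqrt{|A(I)|}\sum_{t\le T}t=O\bigl(\Delta\sqrt{|A(I)|}\,T^2\bigr)$ -- and here the exponent $\tfrac52-\alpha$ being at most $2$ is precisely why $\alpha=\tfrac32$ (rather than a smaller value) is chosen. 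Dividing by $\sum_{t=1}^{T}t^2=\Theta(T^3)$ gives a per-infoset weighted regret of $O\bigl(\Delta\sqrt{|A(I)|}/\sqrt T+\Delta\sqrt{|A(I)|}/T\bigr)$; summing over $\mathcal{I}_i$ and invoking the $2\epsilon$-Nash result then yields a bound of the stated form $6\Delta|\mathcal{I}|(\sqrt{|A|}+\tfrac1{\sqrt T})/\sqrt T$ once the constants are tracked (the leading coefficient is $2\cdot 3=6$). The main obstacle is the step in the first paragraph -- recognizing that DCFR must be analyzed through the bounded auxiliary quantity $\tilde R^t$ rather than directly as a weighted regret minimizer -- together with establishing the lower bound $\tilde R^t\ge-\Delta$, which is the delicate estimate and the one that forces the choice $\beta=0$.
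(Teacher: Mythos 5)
Your proposal is correct and reaches the stated bound, but its central mechanism is genuinely different from the paper's. You share the outer skeleton with the paper -- bound the $t^2$-weighted counterfactual regret at each infoset, apply the weighted analogue of (\ref{eq:bound}), and invoke the weighted-average-regret-to-Nash result -- and your key estimates check out: the RM+ potential argument gives $\tilde R^t(I,a)\le\Delta\sqrt{|A(I)|\,t}$ despite the discounting (the factors are at most $1$ and the cross term vanishes), the induction gives the floor $\tilde R^t(I,a)\ge-\Delta$, the inversion $r^t=(\tilde R^t-\tilde R^{t-1})+\delta_t\tilde R^t$ is exact, and after Abel summation, dividing by $\sum_{t\le T}t^2\ge T^3/3$, summing over infosets, and converting average regret to exploitability, the constants do fit under $6\Delta|\mathcal{I}|(\sqrt{|A|}+\tfrac1{\sqrt T})/\sqrt T$. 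The paper instead gets the per-infoset bound from two imported lemmas: the discounted regret-matching bound (\ref{eq:discount_bound}) combined with Lemma~\ref{le:neg} yields an upper bound on the DCFR-weighted regrets (Lemma~\ref{le:dcfr_regret}), a cruder floor of $-2\Delta$ plays the role of your $-\Delta$, and the transfer from DCFR's discount weights to the $t^2$ averaging weights is done by the extremal-sequence Lemma~\ref{le:plausible} on ``$BC$-plausible'' sequences, paying $w_T(B-C)$ with $w_T=\frac{6T^2}{T(T+1)(2T+1)}$. What each route buys: the paper's is shorter given its lemmas and generalizes immediately to any non-decreasing reweighting (the same device proves Theorem~\ref{th:cfrp}); yours is more self-contained and makes fully explicit how the raw instantaneous regrets relate to the stored discounted quantities -- exactly the point where the paper's write-up is loose, since it applies Lemma~\ref{le:plausible} to the instantaneous regrets while the bounds $B$ and $C$ are established for the discounted accumulated regrets, a mismatch your $\delta_t$ correction term resolves explicitly. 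Your version also isolates cleanly why $\alpha=\tfrac32$ (so the $\delta_t$ term contributes only $O(T^2)$) and $\beta=0$ (so the floor stays at a constant) are the right parameter choices.
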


We provide the proof in the appendix. It combines elements of the proof for CFR+~\cite{Tammelin15:Solving} and the proof that discounting in regret minimization is sound~\cite{Cesa-Bianchi06:Prediction}.

One of the drawbacks of setting $\beta \le 0$ is that suboptimal actions (that is, actions that have an expected value lower than some other action in every equilibrium) no longer have regrets that approach $-\infty$ over time. Instead, for $\beta = 0$ they will approach some constant value and for $\beta < 0$ they will approach $0$. This makes the algorithm less compatible with improvements that prune negative-regret actions~\cite{Brown15:Regret-Based,Brown17:Reduced}. Such pruning algorithms can lead to more than an order of magnitude reduction in computational and space requirements for some games. Setting $\beta > 0$ better facilitates this pruning. For this reason in our experiments we also show results for $\beta = 0.5$.

\vspace{-0.02in}
\section{Experimental setup}
\vspace{-0.01in}
We now introduce the games used in our experiments.
\vspace{-0.02in}
\subsection{Description of heads-up no-limit Texas hold'em}
\vspace{-0.04in}
We conduct experiments on subgames of HUNL poker, a primary benchmark for imperfect-information game solving. In the version of HUNL we use, and which is standard in the Annual Computer Poker Competition, the two players ($P_1$ and $P_2$) start each hand with \$20,000. The players alternate positions after each hand. On each of the four rounds of betting, each player can choose to either fold, call, or raise. Folding results in the player losing and the money in the pot being awarded to the other player. Calling means the player places a number of chips in the pot equal to the opponent's share. Raising means the player adds more chips to the pot than the opponent's share. A round ends when a player calls (if both players have acted). Players cannot raise beyond the \$20,000 they start with. All raises must be at least \$100 and at least as larger as any previous raise on that round.

At the start of each hand of HUNL, both players are dealt two private cards from a standard 52-card deck. $P_1$ places \$100 in the pot and $P_2$ places \$50 in the pot. A round of betting then occurs. Next, three \emph{community} cards are dealt face up. Another round of betting occurs, starting with $P_1$. After the round is over, another community card is dealt face up, and another round of betting starts with $P_1$ acting first. Finally, one more community card is revealed and a final betting round occurs starting with $P_1$. Unless a player has folded, the player with the best five-card poker hand, constructed from their two private cards and the five community cards, wins the pot. In the case of a tie, the pot is split evenly.

Although the HUNL game tree is too large to traverse completely without sampling, state-of-the-art agents for HUNL solve subgames of the full game in real time during play~\cite{Brown17:Safe,Moravcik17:DeepStack,Brown17:Superhuman,Brown18:Depth} using a small number of the available bet sizes. For example, \emph{Libratus} solved in real time the remainder of HUNL starting on the third betting round. We conduct our HUNL experiments on four subgames generated by \emph{Libratus}~\footnote{https://github.com/CMU-EM/LibratusEndgames}. The subgames were selected prior to testing. Although the inputs to the subgame are publicly available (the beliefs of both players at the start of the subgame about what state they are in, the number of chips in the pot, and the revealed cards), the exact bet sizes that \emph{Libratus} considered have not been publicly revealed. We therefore use the bet sizes of 0.5x and 1x the size of the pot, as well as an all-in bet (betting all remaining chips) for the first bet of each round. For subsequent bets in a round, we consider 1x the pot and all-in.

Subgame 1 begins at the start of the third betting round and continues to the end of the game. There are \$500 in the pot at the start of the round. This is the most common situation to be in upon reaching the third betting round, and is also the hardest for AIs to solve because the remaining game tree is the largest. Since there is only \$500 in the pot but up to \$20,000 could be lost, this subgames contains a number of high-penalty mistake actions. Subgame 2 begins at the start of the third betting round and has \$4,780 in the pot at the start of the round.
Subgame 3 begins at the start of the fourth (and final) betting round with \$500 in the pot, which is a common situation. Subgame 4 begins at the start of the fourth betting round with \$3,750 in the pot. Exploitability is measured in terms of milli big blinds per game (mbb/g), a standard measurement in the field, which represents the number of big blinds ($P_1$'s original contribution to the pot) lost per hand of poker multiplied by 1,000.
\vspace{-0.02in}
\subsection{Description of Goofspiel}
\vspace{-0.04in}
In addition to HUNL subgames, we also consider a version of the game of Goofspiel (limited to just five cards per player). In this version of Goofspiel, each player has five hidden cards in their hand (A, 2, 3, 4, and 5), with A being valued as 1. A deck of five cards (also of rank A, 2, 3, 4, and 5), is placed between the two players. In the variant we consider, both players know the order of revealed cards in the center will be A, 2, 3, 4, 5. On each round, the top card of the deck is flipped and is considered the prize card. Each player then simultaneously plays a card from their hand. The player who played the higher-ranked card wins the prize card. If the players played the same rank, then they split the prize's value. The cards that were bid are discarded. At the end of the game, players add up the ranks of their prize cards. A player's payoff is the difference between his total value and the total value of his opponent.

\vspace{-0.02in}
\section{Experiments on Regret Discounting and Weighted Averaging}
\vspace{-0.01in}
Our experiments are run for 32,768 iterations for HUNL subgames and 8,192 iterations for Goofspiel. Since all the algorithms tested only converge to an $\epsilon$-equilibrium rather than calculating an exact equilibrium, it is up to the user to decide when a solution is sufficiently converged to terminate a run. In practice, this is usually after 100 - 1,000 iterations~\cite{Brown17:Superhuman,Moravcik17:DeepStack}. For example, an exploitability of 1 mbb/g is considered sufficiently converged so as to be essentially solved~\cite{Bowling15:Heads-up}. Thus, the performance of the presented algorithms between 100 and 1,000 iterations is arguably more important than the performance beyond 10,000 iterations. Nevertheless, we show performance over a long time horizon to display the long-term behavior of the algorithms. All our experiments use the alternating-updates form of CFR. We measure the average exploitability of the two players.

Our experiments show that LCFR can dramatically improve performance over CFR+ over reasonable time horizons in certain games. However, asymptotically, LCFR appears to do worse in practice than CFR+. LCFR does particularly well in subgame 1 and 3, which (due to the small size of the pot relative to the amount of money each player can bet) have more severe mistake actions compared to subgames 2 and 4. It also does poorly in Goofspiel, which also likely does not have severely suboptimal actions. This suggests that LCFR is particularly well suited for games with the potential for large mistakes.

Our experiments also show that DCFR$_{\frac{3}{2},0,2}$ matches or outperforms CFR+ across the board. The improvement is usually a factor of 2 or 3. In Goofspiel, DCFR$_{\frac{3}{2},0,2}$ results in essentially identical performance as CFR+.

DCFR$_{\frac{3}{2},-\infty,2}$, which sets negative regrets to zero rather than multiplying them by $\frac{1}{2}$ each iteration, generally also leads to equally strong performance, but in rare cases (such as in Figure~\ref{fig:subgame2}), can produce a spike in exploitability that takes many iterations to recover from. Thus, we generally recommend using DCFR$_{\frac{3}{2},0,2}$ over DCFR$_{\frac{3}{2},-\infty,2}$.

DCFR$_{\frac{3}{2},\frac{1}{2},2}$ multiplies negative regrets by $\frac{\sqrt{t}}{\sqrt{t} + 1}$ on iteration $t$, which allows suboptimal actions to decrease in regret to $-\infty$ and thereby facilitates algorithms that temporarily prune negative-regret sequences. In the HUNL subgames, DCFR$_{\frac{3}{2},\frac{1}{2},2}$ performed very similarly to DCFR$_{\frac{3}{2},0,2}$. However, in Goofspiel it does noticeably worse. This suggests that DCFR$_{\frac{3}{2},\frac{1}{2},2}$ may be preferable to DCFR$_{\frac{3}{2},0,2}$ in games with large mistakes when a pruning algorithm may be used, but that DCFR$_{\frac{3}{2},0,2}$ should be used otherwise.

\begin{figure}[!h]
	\centering
	\includegraphics[width=83mm]{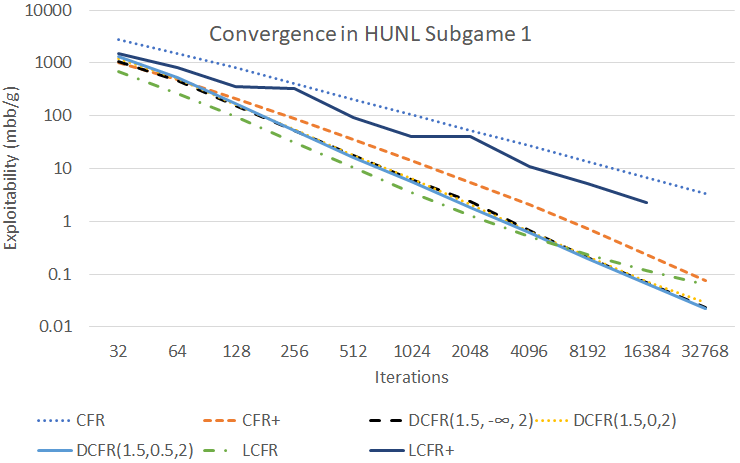}
	\caption{Convergence in HUNL Subgame1.}
	\label{fig:subgame1}
	\vspace{-0.05in}
\end{figure}

\begin{figure}[!h]
	\centering
	\includegraphics[width=83mm]{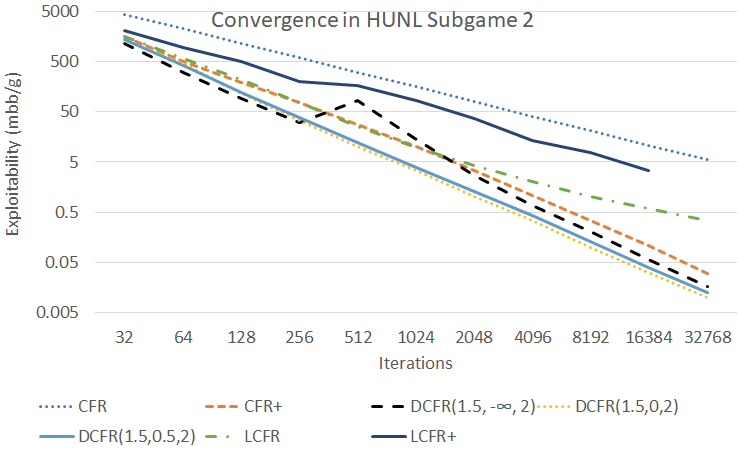}
	\caption{Convergence in HUNL Subgame2.}
	\label{fig:subgame2}
	\vspace{-0.05in}
\end{figure}

\begin{figure}[!h]
	\centering
	\includegraphics[width=83mm]{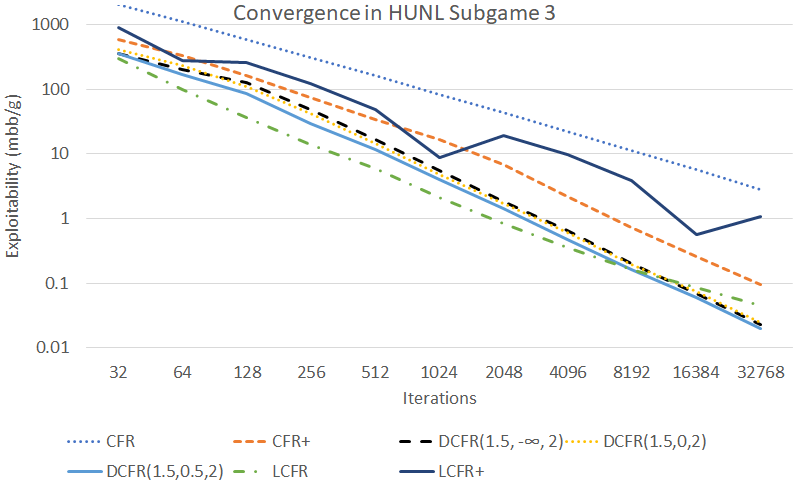}
	\caption{Convergence in HUNL Subgame 3.}
	\label{fig:subgame3}
	\vspace{-0.05in}
\end{figure}

\begin{figure}[!h]
	\centering
	\includegraphics[width=83mm]{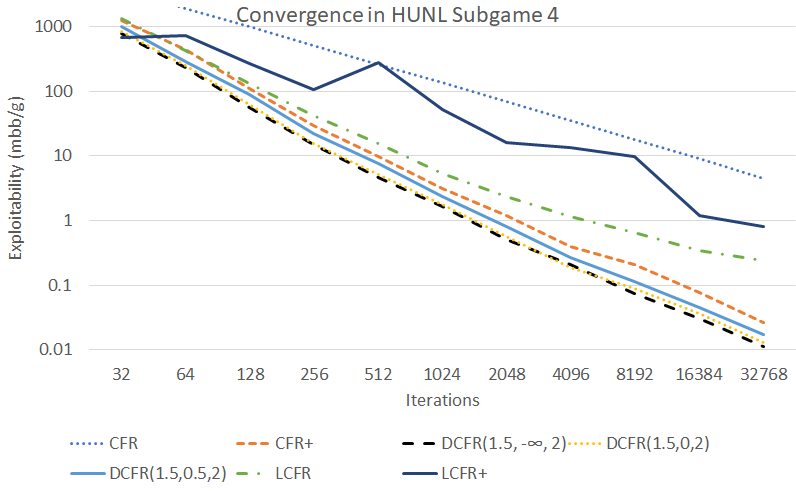}
	\caption{Convergence in HUNL Subgame 4.}
	\label{fig:subgame4}
	\vspace{-0.05in}
\end{figure}

\begin{figure}[!h]
	\centering
	\includegraphics[width=83mm]{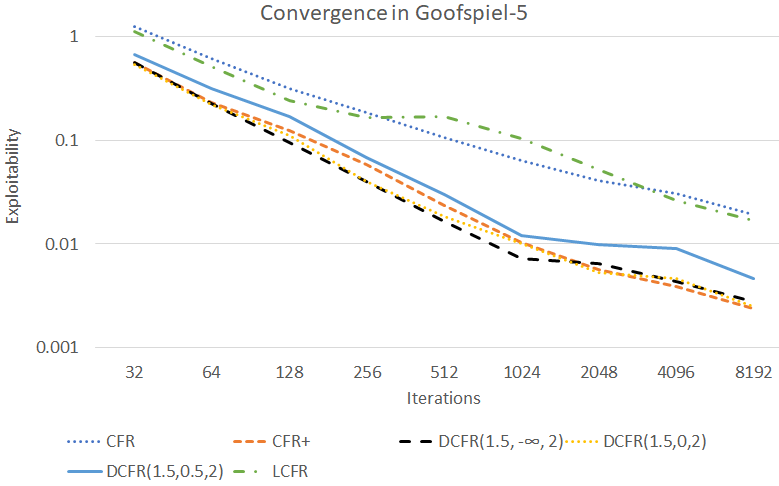}
	\caption{Convergence in 5-card Goofspiel variant.}
	\label{fig:goofspiel}
	\vspace{-0.05in}
\end{figure}
\vspace{-0.02in}
\section{NormalHedge for CFR Variants}
\vspace{-0.01in}
CFR is a framework for applying regret minimization independently at each infoset in the game. Typically RM is used as the regret minimizer primarily due to its lack of parameters and its simple implementation. However, any regret minimizer can be applied. Previous research investigated using Hedge~\cite{Littlestone94:Weighted,Freund97:Decision-Theoretic} in CFR rather than RM~\cite{Brown17:Dynamic}. This led to better performance in small games, but worse performance in large games. In this section we investigate instead using NormalHedge (NH)~\cite{Chaudhuri09:Parameter-free} as the regret minimizer in CFR.

In NH, on each iteration $T+1$ a player $i$ selects actions $a \in A(I)$ proportional to $\frac{R_+^T(I,a)}{c_t}\exp\big(\frac{(R_+^T(I,a))^2}{2c_t}\big)$ where $c_t > 0$ satisfies $\frac{1}{N}\sum_{i=1}^N\exp\big(\frac{(R_+^T(I,a))^2}{2c_t}\big) = e$. If a player plays according to NH in infoset $I$, then cumulative regret for that infoset is at most $O(\Delta\sqrt{T\ln(|A|)} + \Delta\ln^2(|A|))$. 

NH shares two desirable properties with RM: it does not have any parameters and it assigns zero probability to actions with negative regret (which means it can be easily used in CFR+ with a floor on regret at zero). However, the NH operation is more computationally expensive than RM because it involves exponentiation and a line search for $c_t$.

In our experiments we investigate using NH in place of RM for DCFR$_{\frac{3}{2},0,2}$ and in place of RM for LCFR. We found that NH did worse in all HUNL subgames compared to RM in LCFR, so we omit those results. Figure~\ref{fig:nh1} and Figure~\ref{fig:nh3} shows that NH outperforms RM in HUNL subgames when combined with  DCFR$_{\frac{3}{2},0,2}$. However, it does worse than RM in Figure~\ref{fig:nh2} and Figure~\ref{fig:nh4}. The two subgames it does better in have the largest ``mistake'' actions, which suggest NH may do better in games that have large mistake actions.

In these experiments the performance of NH is measured in terms of exploitability as a function of number of iterations. However, in our implementation, each iteration takes five times longer due to the exponentiation and line search operations involved in NH. Thus, using NH actually slows convergence in practice. Nevertheless, NH may be preferable in certain situations where the cost of the exponentiation and line search operations are insignificant, such as when an algorithm is bottlenecked by memory access rather than computational speed.
\vspace{-0.02in}
\section{Optimistic CFR Variants}
\vspace{-0.01in}
Optimistic Hedge~\cite{Syrgkanis15:Fast} is a regret minimization algorithm similar to Hedge in which the last iteration is counted twice when determining the strategy for the next iteration. This can lead to substantially faster convergence, including in some cases an improvement over the $O(\frac{1}{\epsilon^2})$ bound on regret of typical regret minimizers.

We investigate counting the last iteration twice when calculating the strategy for the next iteration~\cite{Burch17:Time}. Formally, when applying Equation~(\ref{eq:rm}) to determine the strategy for the next iteration, we use a modified regret $R_{\textit{mod}}^T(I,a) = \sum_{t=1}^{T-1}r^t(I,a) + 2r^T(I,a)$ in the equation in place of $R^T(I,a)$. We refer to this as Optimistic RM, and any CFR variant that uses it as Optimistic. We found that Optimistic DCFR$_{\frac{3}{2},0,2}$ did worse than DCFR$_{\frac{3}{2},0,2}$ in all HUNL subgames, so we omit those results. Figure~\ref{fig:nh1} and Figure~\ref{fig:nh3} shows that Optimistic LCFR outperforms LCFR in two HUNL subgames. However, it does worse than LCFR in Figure~\ref{fig:nh2} and Figure~\ref{fig:nh4}. Just as in the case of NH, the two subgames that Optimistic LCFR does better in have the largest ``mistake'' actions, which suggests that Optimistic LCFR may do better than LCFR in games that have large mistake actions. These are the same situations that LCFR normally excels in, so this suggests that in a situation where LCFR is preferable, one may wish to use Optimistic LCFR.

\begin{figure}[!h]
	\centering
	\includegraphics[width=83mm]{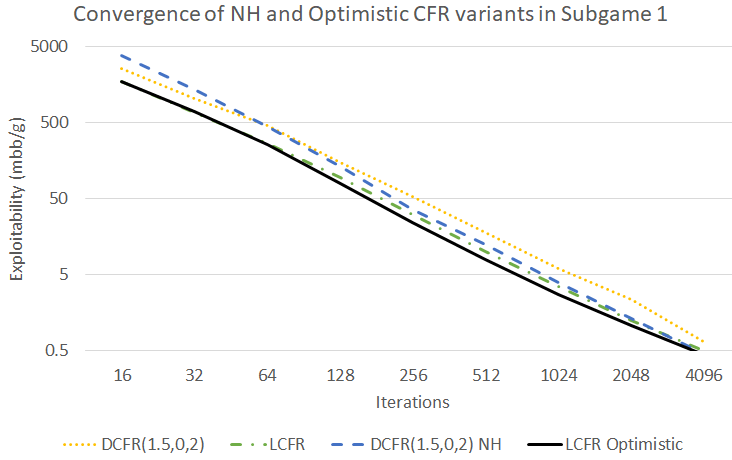}
	\caption{Convergence in HUNL Subgame 1.}
	\label{fig:nh1}
	\vspace{-0.05in}
\end{figure}

\begin{figure}[!h]
	\centering
	\includegraphics[width=83mm]{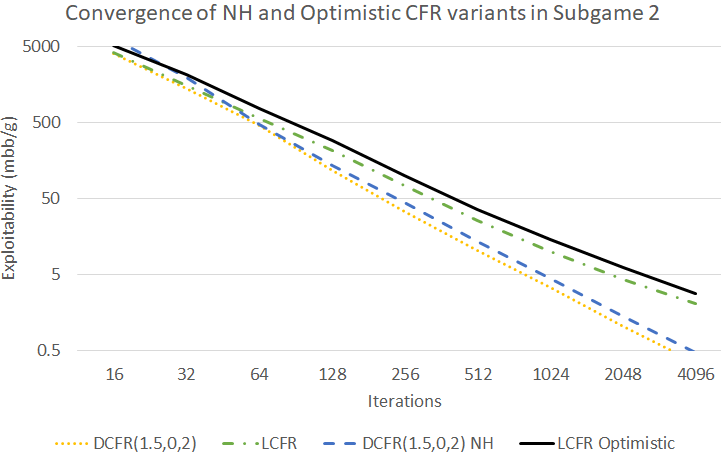}
	\caption{Convergence in HUNL Subgame 2.}
	\label{fig:nh2}
	\vspace{-0.05in}
\end{figure}

\begin{figure}[!h]
	\centering
	\includegraphics[width=83mm]{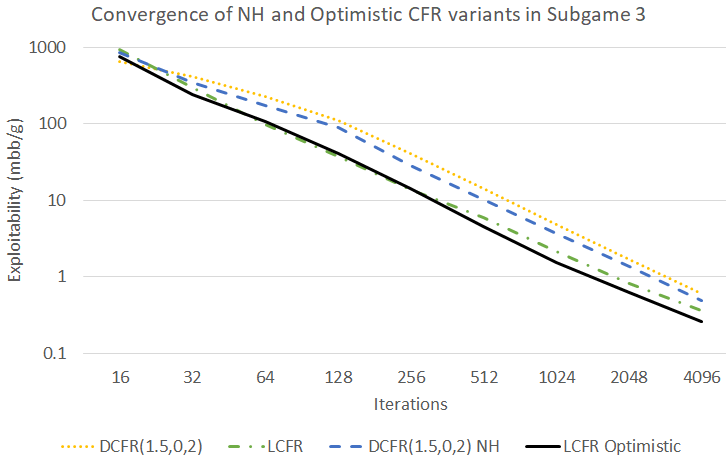}
	\caption{Convergence in HUNL Subgame 3.}
	\label{fig:nh3}
	\vspace{-0.05in}
\end{figure}

\begin{figure}[!h]
	\centering
	\includegraphics[width=83mm]{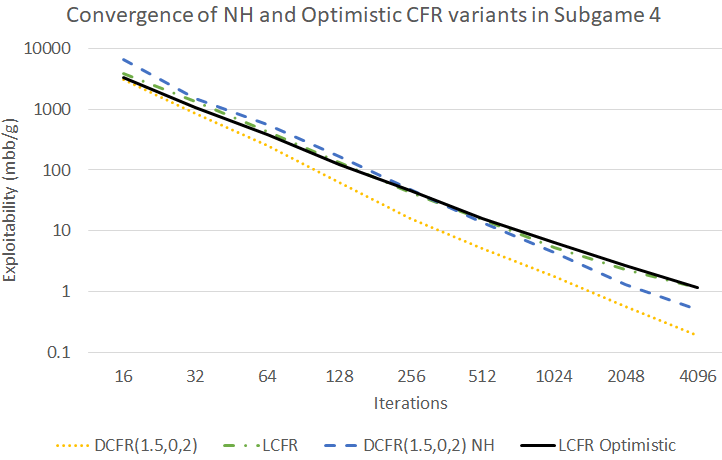}
	\caption{Convergence in HUNL Subgame 4.}
	\label{fig:nh4}
	\vspace{-0.05in}
\end{figure}

\vspace{-0.01in}
\section{Discounted Monte Carlo CFR}
\vspace{-0.01in}
\emph{Monte Carlo CFR (MCCFR)} is a variant of CFR in which certain player actions or chance outcomes are sampled~\cite{Lanctot09:Monte,Gibson12:Generalized}. MCCFR combined with abstraction has produced state-of-the-art HUNL poker AIs~\cite{Brown17:Superhuman}. It is also particularly useful in games that do not have a special structure that can be exploited to implement a fast vector-based implementation of CFR~\cite{Lanctot09:Monte,Johanson11:Accelerating}. There are many forms of MCCFR with different sampling schemes. The most popular is external-sampling MCCFR, in which opponent and chance actions are sampled according to their probabilities, but all actions belonging to the player updating his regret are traversed. Other MCCFR variants exist that achieve superior performance~\cite{Jackson17:Targeted}, but external-sampling MCCFR is simple and widely used, which makes it useful as a benchmark for our experiments.

\begin{figure}[!h]
	\centering
	\includegraphics[width=83mm]{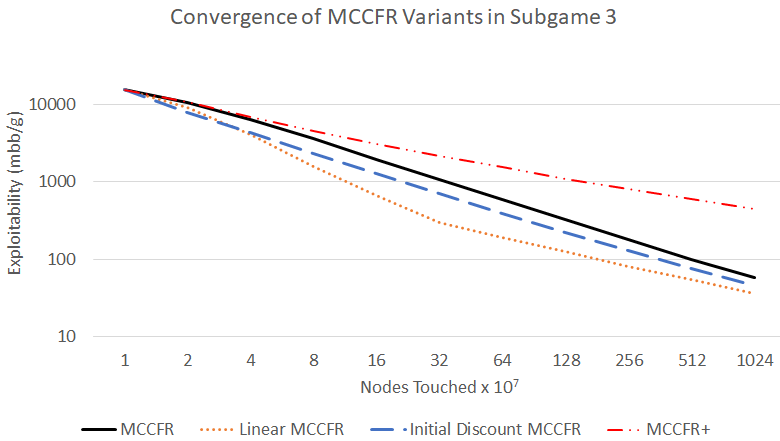}
	\caption{Convergence of MCCFR in HUNL Subgame 3.}
	\label{fig:mccfr3}
	\vspace{-0.05in}
\end{figure}

\begin{figure}[!h]
	\centering
	\includegraphics[width=83mm]{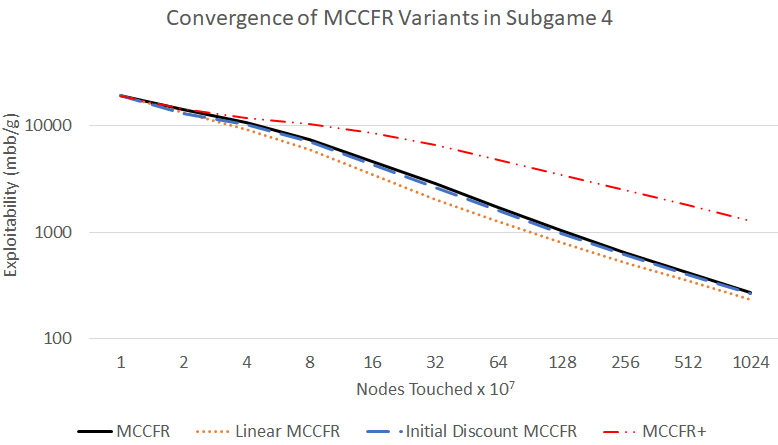}
	\caption{Convergence of MCCFR in HUNL Subgame 4.}
	\label{fig:mccfr4}
	\vspace{-0.05in}
\end{figure}

Although CFR+ provides a massive improvement over CFR in the unsampled case, the changes present in CFR+ (a floor on regret at zero and linear averaging), do not lead to superior performance when applied to MCCFR~\cite{Burch17:Time}. In contrast, in this section we show that the changes present in LCFR do lead to superior performance when applied to MCCFR. Specifically, we divide the MCCFR run into periods of $10^7$ nodes touched. Nodes touched is an implementation-independent and hardware-independent proxy for time that counts the number of nodes traversed (including terminal nodes). After each period $n$ ends, we multiply all accumulated regrets and contributions to the average strategies by $\frac{n}{n+1}$. Figure~\ref{fig:mccfr3} and Figure~\ref{fig:mccfr4} demonstrate that this leads to superior performance in HUNL compared to vanilla MCCFR. The improvement is particularly noticeable in subgame 3, which features the largest mistake actions. We also show performance if one simply multiplies the accumulated regrets and contributions to the average strategy by $\frac{1}{10}$ after the first period ends, and thereafter runs vanilla MCCFR (the ``Initial Discount MCCFR'' variant). The displayed results are the average of 100 different runs.

\vspace{-0.01in}
\section{Conclusions}
\vspace{-0.01in}
We introduced variants of CFR that discount prior iterations, leading to stronger performance than the prior state-of-the-art CFR+, particularly in settings that involve large mistakes. In particular, the DCFR$_{\frac{3}{2},0,2}$ variant matched or outperformed CFR+ in all settings.

\section{Acknowledgments}

This material is based on work supported by the National Science Foundation under grants IIS-1718457, IIS-1617590, and CCF-1733556, and the ARO under award W911NF-17-1-0082. Noam is also sponsored by an Open Philanthropy Project AI Fellowship and a Tencent AI Lab Fellowship.

\bibliographystyle{aaai}

\clearpage

\newpage

\appendix
\noindent {\LARGE \bf Appendix}

\section{Proof of Theorem~\ref{th:cfrp}}

Consider the weighted sequence of iterates $\sigma'^1,...,\sigma'^T$ in which $\sigma'^t$ is identical to $\sigma^t$, but weighed by $w_t$. The regret of action $a$ in infoset $I$ on iteration $t$ of this new sequence is $R'^t(I,a)$.

From Lemma~\ref{le:q} we know that $R^t(I,a) \le \Delta\sqrt{|A|}\sqrt{T}$ for player~$i$ for action $a$ in infoset $I$. Since $w_{a,t}$ is a non-decreasing sequence, so we can apply Lemma~\ref{le:plausible} using weight $w_{t}$ for iteration $t$ with $B = \Delta\sqrt{|A|}\sqrt{T}$ and $C = 0$. From Lemma~\ref{le:plausible}, this means that $R'^t(I,a) \le w_T \Delta \sqrt{|A|} \sqrt{T}$. Applying (\ref{eq:bound}), we get weighted regret is at most $w_T \Delta|\mathcal{I}_i|\sqrt{|A|}{\sqrt{T}}$ for player~$i$. Thus, weighted average regret is at most $\frac{w_T \Delta|\mathcal{I}_i|\sqrt{|A|}{\sqrt{T}}}{\sum_{t=1}^T w_t}$ . Since $|\mathcal{I}_1| + |\mathcal{I}_2| = |\mathcal{I}|$, so the weighted average strategies form a $\frac{w_T \Delta|\mathcal{I}|\sqrt{|A|}{\sqrt{T}}}{\sum_{t=1}^T w_t}$-Nash equilibrium.
	
\section{Proof Theorem~\ref{th:dcfr}}

\begin{proof}
	
	Since the lowest amount of instantaneous regret on any iteration is $-\Delta$ and DCFR multiplies negative regrets by $\frac{1}{2}$ each iteration, so regret for any action at any point is greater than $-2\Delta$.
	
	Consider the weighted sequence of iterates $\sigma'^1,...,\sigma'^T$ in which $\sigma'^t$ is identical to $\sigma^t$, but weighed by $w_{a,t} = \Pi_{i = t}^{T-1} \frac{i^2}{(i + 1)^2} = \frac{6t^2}{T(T+1)(2T+1)}$ rather than $w_t = \Pi_{i = t}^{T-1} \frac{i^{3/2}}{i^{3/2} + 1}$. The regret of action $a$ in infoset $I$ on iteration $t$ of this new sequence is $R'^t(I,a)$.
	
	From Lemma~\ref{le:dcfr_regret} we know that $R^t(I,a) \le 2\Delta\sqrt{|A|}\sqrt{T}$ for player~$i$ for action $a$ in infoset $I$. Since $w_{a,t}$ is an increasing sequence, so we can apply Lemma~\ref{le:plausible} using weight $w_{a,t}$ for iteration $t$ with $B = 2\Delta\sqrt{|A|}\sqrt{T}$ and $C = -2\Delta$. From Lemma~\ref{le:plausible}, this means that $R'^t(I,a) \le \frac{6 T^2 (2\Delta \sqrt{|A|} \sqrt{T} + 2\Delta)}{\big(T(T+1)(2T + 1)\big)} \le 6\Delta(|\sqrt{|A|} + \frac{1}{\sqrt{T}})/\sqrt{T}$. Applying (\ref{eq:bound}), we get weighted regret is at most $6\Delta|\mathcal{I}_i|(|\sqrt{|A|} + \frac{1}{\sqrt{T}})/\sqrt{T}$. Since the weights sum to one, this is also weighted average regret. Since $|\mathcal{I}_1| + |\mathcal{I}_2| = |\mathcal{I}|$, so the weighted average strategies form a $6\Delta(|\mathcal{I}|(\sqrt{|A|} + \frac{1}{\sqrt{T}})/\sqrt{T}$-Nash equilibrium.
\end{proof}

\begin{lemma}
	\label{le:plausible}
	Call a sequence $x_1,...,x_T$ of bounded real values $BC$-plausible if $B > 0$, $C \le 0$, $\sum_{t = 1}^i x_t \ge C$ for all $i$, and $\sum_{t=1}^T x_t \le B$. For any $BC$-plausible sequence and any sequence of non-decreasing weights $w_t \ge 0$, $\sum_{t = 1}^T (w_t x_t) \le w_T (B - C)$.
\end{lemma}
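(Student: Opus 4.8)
The plan is to prove this by summation by parts (Abel summation) applied to the partial sums of the sequence. Define $S_0 = 0$ and $S_i = \sum_{t=1}^i x_t$ for $i \ge 1$, so that $x_t = S_t - S_{t-1}$ and the two hypotheses on the sequence read simply $S_i \ge C$ for all $i$ and $S_T \le B$.

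First I would rewrite the target quantity using the telescoping structure and collect terms:
\[
\sum_{t=1}^T w_t x_t = \sum_{t=1}^T w_t (S_t - S_{t-1}) = w_T S_T - \sum_{t=1}^{T-1} (w_{t+1} - w_t) S_t,
\]
where the boundary contribution at $t = 1$ drops out because $S_0 = 0$. Next I would invoke monotonicity of the weights, $w_{t+1} - w_t \ge 0$, together with the lower bound $S_t \ge C$, to replace each $S_t$ appearing in the sum by $C$ at the cost of an inequality: $-\sum_{t=1}^{T-1}(w_{t+1}-w_t) S_t \le -C\sum_{t=1}^{T-1}(w_{t+1}-w_t) = -C(w_T - w_1)$. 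Combining this with $w_T S_T \le w_T B$ (valid since $w_T \ge 0$ and $S_T \le B$) gives $\sum_{t=1}^T w_t x_t \le w_T B - C(w_T - w_1) = w_T(B - C) + C w_1$. Finally, since $C \le 0$ and $w_1 \ge 0$, the leftover term $C w_1$ is non-positive and may be discarded, yielding the claimed bound $\sum_{t=1}^T w_t x_t \le w_T(B - C)$.

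There is no serious obstacle here: the whole argument is essentially a one-line Abel summation once the partial sums $S_i$ are introduced. The only points that require care are bookkeeping ones — getting the direction of each inequality right (in particular, that the non-decreasing weights make the coefficients $w_{t+1} - w_t$ non-negative, so that the lower bound $S_t \ge C$ is applied in the favorable direction), handling the $S_0 = 0$ boundary term, and correctly discarding the residual $C w_1$ using $C \le 0$ and $w_1 \ge 0$. A quick check of the degenerate cases ($T = 1$, or all weights equal to zero) confirms the bound holds trivially there as well.
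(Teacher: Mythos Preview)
Your proof is correct. The Abel summation argument is clean and the bookkeeping is right: the identity $\sum_t w_t x_t = w_T S_T - \sum_{t=1}^{T-1}(w_{t+1}-w_t)S_t$ holds with $S_0=0$, and the inequalities are applied in the correct direction throughout.

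The paper takes a different route: rather than summing by parts, it argues by constructing an explicit extremal sequence. It shows (by a backward induction on $t$) that the $BC$-plausible sequence maximizing $\sum_t w_t x_t$ is $x_1^*=C$, $x_t^*=0$ for $1<t<T$, and $x_T^*=B-C$, which gives the value $w_1 C + w_T(B-C)\le w_T(B-C)$. Both arguments end up at the same intermediate bound $w_T(B-C)+Cw_1$ before discarding the nonpositive term $Cw_1$. Your Abel-summation approach is shorter and avoids the slightly informal ``assume the maximizer has this form'' reasoning in the paper; the paper's approach, on the other hand, makes transparent exactly which sequence attains the bound (up to the discarded $Cw_1$ term), which is sometimes useful intuition.
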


\begin{proof}
	The lemma closely resembles Lemma~3 from \cite{Tammelin15:Solving} and the proof shares some elements.
	
	We construct a $BC$-plausible sequence $x^*_1,...,x^*_T$ that maximizes the weighted sum. That is, $\sum_{t=1}^T w_t x'_t = \max_{x'_1,...,x'_T}\sum_{t=1}^T w_t x'_t$. We show that $x^*_1 = C$, $x^*_t = 0$ for $1 < t < T$, and $x^*_T = (B - C)$.
	
	Consider $x^*_T$. Clearly in order to maximize the weighted sum, $x^*_T = B - \sum_{t=1}^{T-1} (w_t x^*_t)$. Next, consider $x^*_t$ for $t < T$ and assume $x^*_{t'} = C - \sum_{t=1}^{t'} (w_t x^*_t)$ for $t < t' < T$ and assume $x^*_T = B - \sum_{t=1}^{T-1} (w_t x^*_t)$. Since $w_t \le w_T$ and $w_t \le w_{t'}$, so $\sum_{i=t}^{T} (w_i x^*_i)$ would be maximized if $x^*_t = C - \sum_{i=1}^{t-1} (w_i x^*_i)$. By induction, this means $x^*_1 = C$, $x^*_t = 0$ for $1 < t < T$, and $x^*_T = B - C$. In this case $\sum_{t=1}^T (w_t x^*_t) \le w_T (B-C) + w_1 C \le w_T (B - C)$. Since $x^*$ is a maximizing sequence, so for any sequence $x$ we have that $\sum_{t = 1}^T (w_t x_t) \le w_T (B - C)$.
\end{proof}

\begin{lemma}
	\label{le:q}
Given a sequence of strategies $\sigma^1,...,\sigma^T$, each defining a probability distribution over a set of actions $A$, consider any definition for $Q^t(a)$ satisfying the following conditions:
\begin{enumerate}
\item $Q^0(a) = 0$
\item $Q^t(a) = Q^{t-1}(a) + r^t(a)$ if $Q^{t-1}(a) + r^t(a) > 0$
\item $0 \ge Q^t(a) \ge Q^{t-1}(a) + r^t(a)$ if $Q^{t-1}(a) + r^t(a) \le 0$
\end{enumerate}
The regret-like value $Q^t(a)$ is then an upper bound on the regret $R^t(a)$ and $Q^t(a) - Q^{t-1}(a) \ge r^t(a) = R^t(a) - R^{t-1}(a)$.
\end{lemma}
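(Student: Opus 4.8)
The plan is to establish both assertions by a short induction on $t$, splitting cases according to which branch of the recursion defines $Q^t(a)$. Fix an action $a \in A$ and abbreviate $Q^t = Q^t(a)$, $R^t = R^t(a)$, $r^t = r^t(a)$; the actual strategies enter only through the numbers $r^t$, and in fact the lemma holds for arbitrary reals $r^t$. I would first record the one trivial fact I need about $R$: from the definition $R^t = \sum_{s=1}^t r^s$ in Equation~(\ref{eq:regret}), together with the empty-sum convention $R^0 = 0$, we get $R^t - R^{t-1} = r^t$ for all $t \ge 1$. This already supplies the equality half of the second claim.

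For the inequality $Q^t - Q^{t-1} \ge r^t$ no induction is needed; I would just check the two cases of the recursion. If $Q^{t-1} + r^t > 0$, condition~2 gives $Q^t - Q^{t-1} = r^t$. If $Q^{t-1} + r^t \le 0$, condition~3 gives $Q^t \ge Q^{t-1} + r^t$, hence $Q^t - Q^{t-1} \ge r^t$. Either way the bound holds.

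For $Q^t \ge R^t$ I would induct on $t$. The base case $t = 0$ is $Q^0 = 0 = R^0$, immediate from condition~1 and the empty-sum convention. For the inductive step, assume $Q^{t-1} \ge R^{t-1}$. If $Q^{t-1} + r^t > 0$, then condition~2 and the inductive hypothesis give $Q^t = Q^{t-1} + r^t \ge R^{t-1} + r^t = R^t$. If $Q^{t-1} + r^t \le 0$, then condition~3 gives $Q^t \ge Q^{t-1} + r^t$, and again $Q^{t-1} + r^t \ge R^{t-1} + r^t = R^t$, so $Q^t \ge R^t$. This closes the induction.

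The argument is entirely routine; the only point requiring a bit of care is that condition~3 does not pin $Q^t$ down to a single value, so the statement has to be read as holding for \emph{every} sequence $(Q^t)_{t}$ satisfying conditions~1--3, and accordingly every step above must use only the one-sided bounds that conditions~2 and~3 provide, never an exact value in the second case. I do not expect any genuine obstacle here: the lemma is essentially the bookkeeping fact that lets RM+-style clipped regrets stand in for true regrets when feeding the later convergence bounds.
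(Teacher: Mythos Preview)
Your proposal is correct and follows essentially the same route as the paper's proof: both establish the increment inequality $Q^t - Q^{t-1} \ge r^t = R^t - R^{t-1}$ and combine it with the base case $Q^0 = R^0 = 0$ to conclude $Q^t \ge R^t$. The paper compresses the argument into two lines (implicitly telescoping the increments rather than redoing the case split inside the induction), whereas you spell out the two cases twice; the underlying idea is identical.
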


\begin{proof}
	The lemma and proof closely resemble Lemma~1 in \cite{Tammelin15:Solving}. For any $t \ge 1$ we have $Q^{t+1}(a) - Q^t(a) \ge Q^t(a) + r^{t+1}(a) - Q^t(a) = R^{t+1}(a) - R^t(a)$. Since $Q^0(a) = 0$ and $R^0(a) = 0$, so $Q^t(a) \ge R^t(a)$.
\end{proof}

\begin{lemma}
	\label{le:neg}
	Given a set of actions $A$ and any sequence of rewards $v^t$ such that $|v^t(a) - v^t(b)| \le \Delta$ for all $t$ and all $a,b \in A$, after playing a sequence of strategies determined by regret matching but using the regret-like value $Q^t(a)$ in place of $R^t(a)$, $Q^T(a) \le \Delta \sqrt{|A|T}$ for all $a \in A$.
\end{lemma}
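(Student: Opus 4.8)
The plan is to run the classical Blackwell-style potential argument for regret matching (as in \cite{Cesa-Bianchi06:Prediction}), adapted to the accumulator $Q^t$. First I would pass to the nonnegative quantities $q^t(a) := \max\{Q^t(a),0\}$ and record the only structural fact needed about $Q^t$: from the three conditions of Lemma~\ref{le:q} one gets $q^t(a) \le \max\{q^{t-1}(a)+r^t(a),\,0\}$. Indeed, if $Q^{t-1}(a)+r^t(a)>0$ then $Q^t(a)=Q^{t-1}(a)+r^t(a)\le q^{t-1}(a)+r^t(a)$ since $Q^{t-1}(a)\le q^{t-1}(a)$; and if $Q^{t-1}(a)+r^t(a)\le 0$ then $Q^t(a)\le 0$, so $q^t(a)=0$. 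Here $x^t$ is the regret-matching distribution built from $q^{t-1}$ (uniform if $q^{t-1}\equiv 0$), and $r^t(a)=v^t(a)-\sum_{a'}x^t(a')v^t(a')$ is the instantaneous regret.

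Next I would introduce the potential $\Phi^t=\sum_{a\in A}(q^t(a))^2$, noting $\Phi^0=0$ because $Q^0(a)=0$. Using the elementary inequality $(\max\{x+y,0\})^2\le x^2+2xy+y^2$, valid for $x\ge 0$ (equality when $x+y\ge 0$, and the right side is $(x+y)^2\ge 0$ otherwise), together with $q^{t-1}(a)\ge 0$, I obtain
\begin{align*}
\Phi^t &\le \sum_{a\in A}\big((q^{t-1}(a))^2 + 2\,q^{t-1}(a)\,r^t(a) + (r^t(a))^2\big) \\
&= \Phi^{t-1} + 2\sum_{a}q^{t-1}(a)\,r^t(a) + \sum_{a}(r^t(a))^2 .
\end{align*}

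It then remains to kill the cross term and bound the square term. For the cross term: if $W^{t-1}:=\sum_{a'}q^{t-1}(a')>0$ then $x^t(a)=q^{t-1}(a)/W^{t-1}$, so $\sum_{a}q^{t-1}(a)r^t(a)=W^{t-1}\sum_a x^t(a)r^t(a)=W^{t-1}\big(\langle x^t,v^t\rangle-\langle x^t,v^t\rangle\big)=0$, and if $W^{t-1}=0$ the sum vanishes trivially. For the square term: $\sum_b x^t(b)v^t(b)$ is a convex combination of the $v^t(b)$, so $|r^t(a)|=\big|v^t(a)-\sum_b x^t(b)v^t(b)\big|\le\max_b|v^t(a)-v^t(b)|\le\Delta$, hence $\sum_a(r^t(a))^2\le|A|\Delta^2$. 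Therefore $\Phi^t\le\Phi^{t-1}+|A|\Delta^2$, and telescoping from $\Phi^0=0$ gives $\Phi^T\le|A|\Delta^2 T$, so $(q^T(a))^2\le\Phi^T$ yields $Q^T(a)\le q^T(a)\le\Delta\sqrt{|A|T}$ for every $a$.

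The only delicate point — and the step I would treat most carefully — is the reduction $q^t(a)\le\max\{q^{t-1}(a)+r^t(a),0\}$ straight from the three conditions of Lemma~\ref{le:q}, and the observation that this \emph{inequality} (rather than the exact RM+ recurrence) suffices: replacing the sequence $q^t$ by any pointwise-smaller nonnegative sequence only decreases $\Phi^t$, while the cross-term identity $\sum_a q^{t-1}(a)r^t(a)=0$ still holds, since that identity uses only that $x^t$ is proportional to $q^{t-1}$. Everything past that is the standard computation.
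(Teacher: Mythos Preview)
Your argument is correct and is exactly the standard Blackwell potential proof that the paper defers to: the paper's own ``proof'' is merely a pointer to Lemma~2 of \cite{Tammelin15:Solving}, which carries out precisely this squared-positive-regret potential argument (cross term vanishes because $x^t\propto q^{t-1}$, square term bounded by $|A|\Delta^2$). Your extra care in showing that the three conditions of Lemma~\ref{le:q} yield $q^t(a)\le\max\{q^{t-1}(a)+r^t(a),0\}$ is the right way to handle the generality here, since $Q^t$ is only specified up to those inequalities rather than being the exact RM+ recurrence.
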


\begin{proof}
	The proof is identical to that of Lemma~2 in \cite{Tammelin15:Solving}.
\end{proof}

\begin{lemma}
	\label{le:dcfr_regret}
	Assume that player~$i$ conducts $T$ iterations of DCFR. Then weighted regret for the player is at most $\Delta |\mathcal{I}_i| \sqrt{|A|} \sqrt{T}$ and weighted average regret for the player is at most $2 \Delta |\mathcal{I}_i| \sqrt{|A|} / \sqrt{T}$.
\end{lemma}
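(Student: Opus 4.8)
The plan is to prove the lemma by importing into each infoset the Blackwell-style potential argument that underlies regret matching (the proof of Lemma~\ref{le:neg}), after checking that DCFR's two discount factors do not spoil it, and then to aggregate over infosets via the weighted form of the decomposition~(\ref{eq:bound}) and divide by the total iteration weight. Throughout, fix player $i$, let $\tilde R^t(I,a)$ denote the accumulated (discounted) regret that DCFR maintains at infoset $I$, and write $d_t := t^{3/2}/(t^{3/2}+1)$ for the positive-regret discount.

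The heart is a per-infoset bound $R^T(I) \le \Delta\sqrt{|A|}\sqrt{T}$. I would first record the identity $\tilde R^t_+(I,a) = d_t\,\big(\tilde R^{t-1}(I,a)+r^t(I,a)\big)_+$: when the pre-discount quantity $\tilde R^{t-1}(I,a)+r^t(I,a)$ is positive this is exactly the $\alpha=3/2$ update, and when it is nonpositive both sides equal $0$ (using only that the $\beta=0$ factor $\tfrac12$ keeps nonpositive values nonpositive). Since $\tilde R^{t-1}(I,a)\le \tilde R^{t-1}_+(I,a)$ and $x\mapsto(x)_+$ is monotone, $\tilde R^t_+(I,a)\le d_t\,(\tilde R^{t-1}_+(I,a)+r^t(I,a))_+$. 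Hence, with $\Phi^t(I):=\sum_a(\tilde R^t_+(I,a))^2$, we get $\Phi^t(I)\le d_t^2\sum_a(\tilde R^{t-1}_+(I,a)+r^t(I,a))^2$; expanding the square, the cross term $\sum_a \tilde R^{t-1}_+(I,a)\,r^t(I,a)$ vanishes because $\sigma^t(I,\cdot)$ is regret matching on $\tilde R^{t-1}_+(I,\cdot)$ — hence proportional to it — and $\sum_a\sigma^t(I,a)\,r^t(I,a)=v^{\sigma^t}(I)-v^{\sigma^t}(I)=0$, while $\sum_a(r^t(I,a))^2\le |A|\Delta^2$. Using $d_t\le 1$ this yields $\Phi^t(I)\le \Phi^{t-1}(I)+|A|\Delta^2$, so with $\Phi^0(I)=0$ we obtain $\Phi^T(I)\le |A|\Delta^2 T$ and therefore $R^T(I)=\max_a\tilde R^T_+(I,a)\le\sqrt{\Phi^T(I)}\le\Delta\sqrt{|A|}\sqrt{T}$; i.e. the discounting does not worsen the per-infoset regret bound at all.

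Next I would aggregate. Because DCFR applies the identical discount schedule at every infoset, iteration $t$ carries the same effective weight $w_t:=\prod_{i=t}^{T-1}d_i$ on the regrets everywhere, so the weighted form of~(\ref{eq:bound}) (as developed for weighted CFR in~\cite{Brown14:Regret}) bounds the player's weighted regret by $\sum_{I\in\mathcal{I}_i}R^T(I)\le \Delta|\mathcal{I}_i|\sqrt{|A|}\sqrt{T}$, which is the first claim. For the second, divide by $\sum_{t=1}^T w_t$: from $1-w_t=1-\prod_{i=t}^{T-1}d_i\le\sum_{i=t}^{T-1}(1-d_i)\le\sum_{i\ge t}i^{-3/2}$ one bounds the total deficit $\sum_{t=1}^T(1-w_t)$ and concludes $\sum_{t=1}^T w_t\ge T/2$, whence weighted average regret is at most $\frac{\Delta|\mathcal{I}_i|\sqrt{|A|}\sqrt{T}}{T/2}=2\Delta|\mathcal{I}_i|\sqrt{|A|}/\sqrt{T}$.

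I expect two places to need care: (i) pinning down DCFR's update precisely enough to read off the identity $\tilde R^t_+(I,a)=d_t(\tilde R^{t-1}(I,a)+r^t(I,a))_+$ and to justify that the vanishing-cross-term / non-expansion step survives the per-iteration discounting — pleasantly, the potential recursion turns out to be insensitive to the exact way negative regrets are damped (it only needs nonpositive values to stay nonpositive), so $\beta=0$ versus, say, $\beta=1/2$ is irrelevant here; and (ii) the estimate $\sum_{t=1}^T w_t\ge T/2$, which is immediate asymptotically (the deficit $\sum_t(1-w_t)$ is only $O(\sqrt{T})$) but requires a slightly sharper bound or a direct numerical check for very small $T$. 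An alternative for (i) is to rescale $\tilde R^t(I,a)$ by $\prod_{s\le t}d_s^{-1}$ so that it literally satisfies the hypotheses of Lemma~\ref{le:q} (this is where $\beta\le\alpha$, i.e. $\tfrac12=d_1\le d_t$, enters, since the rescaled value of a halved negative run must stay above the previous rescaled value) and then invoke Lemma~\ref{le:neg} in the form its proof actually supplies, with an iteration-dependent payoff range; the direct potential computation above simply avoids dragging that rescaling through.
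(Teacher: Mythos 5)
Your proposal is correct in substance and reaches the same bounds, but it executes the two key steps differently from the paper, and both differences are worth noting. For the per-infoset bound, the paper treats DCFR as weighted regret matching with weights $w_t=\prod_{i=t}^{T-1}\frac{i^{3/2}}{i^{3/2}+1}$ and simply cites the discounted-RM bound~(\ref{eq:discount_bound}) together with Lemma~\ref{le:neg}, handling the asymmetric treatment of negative regrets through the regret-like $Q$-values of Lemma~\ref{le:q} (the step ``$R_i^{w,T}\le Q_i^{w,T}$''); you instead reprove the Blackwell potential recursion inline, checking that the discount factors $d_t\le 1$ only shrink the potential and that the cross term still vanishes. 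That buys a self-contained argument and makes explicit what the paper leaves implicit, but note that your sentence ``$R^T(I)=\max_a\tilde R^T_+(I,a)$'' should be an inequality: the DCFR accumulator only \emph{dominates} the weighted sum $\sum_t w_t r^t(I,a)$, and this domination holds precisely because the negative-regret factor ($\tfrac12$ for $\beta=0$) is at most $d_t$ --- the point you correctly isolate in your ``rescaling to satisfy Lemma~\ref{le:q}'' remark, which is exactly the paper's $R\le Q$ step, so this is a presentational slip rather than a gap. For the normalization, the paper uses $\frac{i^{3/2}}{i^{3/2}+1}\ge\frac{i}{i+1}$, so $w_t\ge t/T$ telescopes and gives $\sum_{t=1}^T w_t\ge (T+1)/2$ for \emph{every} $T$, whereas your deficit estimate $\sum_t(1-w_t)=O(\sqrt T)$ yields $T/2$ only for $T\ge 16$ and needs the small-$T$ check you flag; the paper's comparison is the cleaner way to close that. (The paper also needs $\sum_t w_t^2\le T$ from $w_t\le 1$, which in your version is absorbed by the inequality $d_t\le 1$ inside the potential recursion.)
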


\begin{proof}
The weight of iteration $t < T$ is $w_t = \Pi_{i = t}^{T-1} \frac{i^{3/2}}{i^{3/2}+1}$ and $w_T = 1$. Thus, $w_t \le 1$ for all $t$ and therefore $\sum_{t=1}^T w_t^2 \le T$.

Additionally, $w_t \ge \Pi_{i = t}^{T-1} \frac{i}{i+1} = \frac{t}{T}$ for $t < T$ and $w_T = 1$. Thus, $\sum_{t=1}^T w_t \ge T(T+1)/(2T) > T/2$.

Applying (\ref{eq:discount_bound}) and Lemma~\ref{le:neg}, we see that $Q_i^{w,T}(I,a) \le \frac{\Delta\sqrt{|A|}\sqrt{\sum_{t=1}^Tw_t^2}}{\sum_{t=1}^Tw_t} \le \frac{2\Delta\sqrt{|A|}\sqrt{T}}{T}$. From (\ref{eq:bound}) we see that $Q_i^{w,T} \le \frac{2\Delta|\mathcal{I}_i|\sqrt{|A|}\sqrt{T}}{T}$. Since $R_i^{w,T} \le Q_i^{w,T}$, so $R_i^{w,T} \le \frac{2\Delta|\mathcal{I}_i|\sqrt{|A|}\sqrt{T}}{T}$.
\end{proof}

\section{Correctness of DCFR(3/2, 1/2, 2)}

\begin{theorem}
	Assume that $T$ iterations of DCFR$_{\frac{3}{2},\frac{1}{2},2}$ are conducted in a two-player zero-sum game. Then the weighted average strategy profile is a $9\Delta|\mathcal{I}||\sqrt{|A|}/\sqrt{T}$-Nash equilibrium.
\end{theorem}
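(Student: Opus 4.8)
The plan is to reuse the proof of Theorem~\ref{th:dcfr} almost verbatim, since DCFR$_{\frac{3}{2},\frac{1}{2},2}$ differs from the default DCFR only in the value of $\beta$, and $\beta$ enters that argument in exactly one place: the lower bound on how negative an accumulated regret can become. The positive-regret side is unchanged. The $\alpha=\frac{3}{2}$ discounting makes iteration $t$'s instantaneous regret effectively carry weight $w_t=\Pi_{i=t}^{T-1}\frac{i^{3/2}}{i^{3/2}+1}$, which still satisfies $\frac{t}{T}\le w_t\le 1$, hence $\sum_t w_t^2\le T$ and $\sum_t w_t\ge\frac{T}{2}$. The argument of Lemma~\ref{le:dcfr_regret} then goes through unchanged, since it only needs the regret-like value $Q^t$ to meet the hypotheses of Lemma~\ref{le:q}: on a step where $Q^{t-1}+r^t\le 0$, multiplying by $\frac{\sqrt t}{\sqrt t+1}\in(0,1)$ leaves the value in $[Q^{t-1}+r^t,0]$, exactly as condition~3 requires (just as the factor $\frac{1}{2}$ did for the default DCFR). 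So the weighted regret of every action at every infoset is still at most $B:=2\Delta\sqrt{|A|}\sqrt T$, the upper bound later fed to Lemma~\ref{le:plausible}.

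The one genuinely new step is the lower bound $C$ for Lemma~\ref{le:plausible}. For $\beta=0$ the accumulated regret never drops below the constant $-2\Delta$; for $\beta=\frac{1}{2}$ it can tend to $-\infty$, so I would instead prove by induction on $t$ that the regret-like value of every action stays at least $-\Delta\sqrt t$. The base case $Q^0=0$ is immediate. In the inductive step, if $Q^{t-1}+r^t>0$ then $Q^t\ge 0$; otherwise $Q^t=\frac{\sqrt t}{\sqrt t+1}\big(Q^{t-1}+r^t\big)\ge\frac{\sqrt t}{\sqrt t+1}\big(-\Delta\sqrt{t-1}-\Delta\big)\ge-\Delta\sqrt t$, where the last inequality is just $\sqrt{t-1}+1\le\sqrt t+1$. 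Since each instantaneous regret obeys $|r^t(I,a)|\le\Delta$ and reweighting by $w_t\le 1$ only shrinks instantaneous regrets in magnitude, the identical induction applies to the reweighted sequence, so after $T$ iterations every partial sum of (reweighted) instantaneous regrets is at least $C:=-\Delta\sqrt T$.

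Feeding $B=2\Delta\sqrt{|A|}\sqrt T$ and $C=-\Delta\sqrt T$ into Lemma~\ref{le:plausible} with the $\gamma=2$ averaging weights $w_{a,t}=\frac{6t^2}{T(T+1)(2T+1)}$, which are non-decreasing and sum to one, bounds the weighted regret of each action at each infoset in the averaged sequence by $\frac{6T^2}{T(T+1)(2T+1)}\big(2\Delta\sqrt{|A|}\sqrt T+\Delta\sqrt T\big)\le\frac{3\Delta(2\sqrt{|A|}+1)}{\sqrt T}\le\frac{9\Delta\sqrt{|A|}}{\sqrt T}$, using $\sqrt{|A|}\ge 1$. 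Summing over a player's infosets with~(\ref{eq:bound}) gives weighted regret---equivalently weighted average regret, since the weights sum to one---at most $\frac{9\Delta|\mathcal I_i|\sqrt{|A|}}{\sqrt T}$ for each player~$i$; since $|\mathcal I_1|+|\mathcal I_2|=|\mathcal I|$, the weighted average strategy profile is a $9\Delta|\mathcal I|\sqrt{|A|}/\sqrt T$-Nash equilibrium.

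I expect the only delicate part to be the lower-bound induction: one has to verify that $-\Delta\sqrt t$ (and not a larger multiple) is the right bound and that it passes unharmed through the positive-regret reweighting, since that constant is precisely what turns the $6$ of Theorem~\ref{th:dcfr} into the $9$ here---everything else is bookkeeping inherited from the earlier proofs.
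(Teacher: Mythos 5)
Your proposal is correct and follows essentially the same route as the paper: the new ingredient is exactly the paper's Lemma~\ref{le:sqrtneg}-style induction showing the $\beta=\tfrac12$ discounting keeps regrets above $-\Delta\sqrt{t}$, after which $B=2\Delta\sqrt{|A|}\sqrt{T}$ from Lemma~\ref{le:dcfr_regret} and the lower bound $C$ are fed into Lemma~\ref{le:plausible} with the $t^2$ averaging weights. The only cosmetic difference is how the constant $9$ emerges (you bound $2\sqrt{|A|}+1\le 3\sqrt{|A|}$, whereas the paper simply takes the looser $C=-\Delta\sqrt{|A|}\sqrt{T}$), which yields the identical final bound.
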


\begin{proof}
From Lemma~\ref{le:sqrtneg}, we know that regret in DCFR$_{\frac{3}{2},\frac{1}{2},2}$ for any infoset $I$ and action $a$ cannot be below $-\Delta\sqrt{T}$.

Consider the weighted sequence of iterates $\sigma'^1,...,\sigma'^T$ in which $\sigma'^t$ is identical to $\sigma^t$, but weighed by $w_{a,t} = \Pi_{i = t}^{T-1} \frac{i^2}{(i + 1)^2} = \frac{6t^2}{T(T+1)(2T+1)}$ rather than $w_t = \Pi_{i = t}^{T-1} \frac{i^{3/2}}{i^{3/2} + 1}$. The regret of action $a$ in infoset $I$ on iteration $t$ of this new sequence is $R'^t(I,a)$.

From Lemma~\ref{le:dcfr_regret} we know that $R^t(I,a) \le 2\Delta\sqrt{|A|}\sqrt{T}$ for player~$i$ for action $a$ in infoset $I$. Since $w_{a,t}$ is an increasing sequence, so we can apply Lemma~\ref{le:plausible} using weight $w_{a,t}$ for iteration $t$ with $B = 2\Delta\sqrt{|A|}\sqrt{T}$ and $C = -\Delta\sqrt{|A|}\sqrt{T}$. From Lemma~\ref{le:plausible}, this means that $R'^t(I,a) \le \frac{6 T^2 (3\Delta \sqrt{|A|} \sqrt{T})}{\big(T(T+1)(2T + 1)\big)} \le 9\Delta(|\sqrt{|A|})/\sqrt{T}$. Applying (\ref{eq:bound}), we get weighted regret is at most $9\Delta|\mathcal{I}_i|(|\sqrt{|A|})/\sqrt{T}$. Since the weights sum to one, this is also weighted average regret. Since $|\mathcal{I}_1| + |\mathcal{I}_2| = |\mathcal{I}|$, so the weighted average strategies form a $9\Delta(|\mathcal{I}|(\sqrt{|A|})/\sqrt{T}$-Nash equilibrium.
\end{proof}

\begin{lemma}
	\label{le:sqrtneg}
	Suppose after each of $T$ iterations of CFR, regret is multiplied by $\frac{\sqrt{t}}{\sqrt{t} + 1}$ on iteration $t$. Then $R^T(I,a) \ge -\Delta \sqrt{T}$ for any infoset $I$ and action $a$.
\end{lemma}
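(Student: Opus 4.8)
The plan is to prove the stronger claim $R^t(I,a) \ge -\Delta\sqrt{t}$ for every $t \in \{0,1,\dots,T\}$ by induction on $t$, and then read off the lemma at $t=T$. Fix $I$ and $a$ and abbreviate $R^t = R^t(I,a)$, $r^t = r^t(I,a)$.

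The only quantitative fact about CFR that is needed is that $r^t \ge -\Delta$ on every iteration. This is immediate from the definitions: $v^{\sigma^t}(I) = \sum_{a'}\sigma^t(I,a')\,v^{\sigma^t}(I,a')$ is a convex combination of the counterfactual action values, each of which lies in an interval of width at most $\Delta$ (the payoff range), so $r^t = v^{\sigma^t}(I,a) - v^{\sigma^t}(I) \ge v^{\sigma^t}(I,a) - \max_{a'} v^{\sigma^t}(I,a') \ge -\Delta$ (the same bound already used in Lemma~\ref{le:neg}). The update being analyzed is $R^t = \frac{\sqrt t}{\sqrt t + 1}\bigl(R^{t-1} + r^t\bigr)$: the instantaneous regret is added first, and then the iteration-$t$ discount factor $\frac{\sqrt t}{\sqrt t+1} \in (0,1)$ is applied.

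For the base case, $R^0 = 0 = -\Delta\sqrt 0$. For the inductive step, assume $R^{t-1}\ge -\Delta\sqrt{t-1}$; then $R^{t-1}+r^t \ge -\Delta\sqrt{t-1}-\Delta = -\Delta(\sqrt{t-1}+1)$, and since the discount factor is positive, $R^t \ge \frac{\sqrt t}{\sqrt t+1}\bigl(-\Delta(\sqrt{t-1}+1)\bigr) = -\Delta\,\frac{\sqrt t\,(\sqrt{t-1}+1)}{\sqrt t+1}$. The argument closes with the elementary inequality $\sqrt{t-1}+1 \le \sqrt t + 1$, which bounds the fraction by $\sqrt t$ and hence gives $R^t \ge -\Delta\sqrt t$. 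Setting $t=T$ yields the lemma.

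I do not expect a real obstacle. The two points requiring care are: (i) the order of operations in the discounted update (add the instantaneous regret before multiplying by $\frac{\sqrt t}{\sqrt t+1}$), since the reverse order gives a weaker bound that is in fact insufficient for the stated $-\Delta\sqrt T$; and (ii) that in the full DCFR$_{\frac{3}{2},\frac{1}{2},2}$ algorithm the factor $\frac{\sqrt t}{\sqrt t+1}$ is applied only when the accumulated regret is negative — positive accumulated regret is multiplied by $\frac{t^{3/2}}{t^{3/2}+1}$ instead — but that branch only keeps $R^t$ nonnegative and so is never the binding case for the lower bound; both branches can be handled at once by noting $R^t \ge \frac{\sqrt t}{\sqrt t+1}\min\{R^{t-1}+r^t,\,0\}$.
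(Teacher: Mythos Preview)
Your proof is correct and follows essentially the same inductive argument as the paper: use $r^t \ge -\Delta$, the update $R^t = \frac{\sqrt t}{\sqrt t+1}(R^{t-1}+r^t)$, and the elementary inequality $\sqrt{t-1}+1 \le \sqrt t+1$ to propagate the bound $R^t \ge -\Delta\sqrt t$. Your write-up is in fact cleaner than the paper's (which has a sign typo in the inductive step) and your remarks on the order of operations and on the positive-regret branch of DCFR$_{3/2,1/2,2}$ are correct and worth keeping.
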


\begin{proof}
	We prove this inductively. On the first iteration, the lowest regret could be after multiplying by $\frac{\sqrt{1}}{\sqrt{1} + 1}$ is $-\frac{\Delta}{2}$. Now assume that after $T$ iterations of CFR in which regret is multiplied by $\frac{\sqrt{t}}{\sqrt{t}+1}$ on each iteration, $R^T(I,a) \ge -\Delta\sqrt{T}$ for infoset $I$ action $a$. After conducting an additional iteration of CFR and multiplying by $\frac{\sqrt{T + 1}}{\sqrt{T+ 1} + 1}$, $R^{T+1}(I,a) \le -\Delta(\sqrt{T} + 1)\frac{\sqrt{T+1}}{\sqrt{T+1} + 1}$.
	Since $\sqrt{T} + 1 \le \sqrt{T + 1} + 1$, so $-\frac{\sqrt{T} + 1}{\sqrt{T + 1} + 1} \Delta\sqrt{T + 1} = -\Delta(\sqrt{T} + 1) \frac{\sqrt{T+ 1}}{\sqrt{T+ 1} + 1} \ge -\Delta\sqrt{T + 1}$. Thus, $R^{T+1}(I,a) \ge -\Delta\sqrt{T+1}$.
\end{proof}

\end{document}